\newtheorem{theorem}{Theorem}[section]
\newtheorem{lemma}[theorem]{Lemma}
\newcommand{\field}[1]{\mathbb{#1}}
\numberwithin{equation}{section}
\begin{document}

\title{Constant curvature holomorphic solutions of the 
supersymmetric grassmannian sigma model: the case of $G(2,4)$}

\author{\vspace{1cm}\\
        {\bf V. Hussin}$^{1,2}$
        \thanks{E-mail address:
        hussin@dms.umontreal.ca} \,,
       {\bf M. Lafrance}$^{2}$
        \thanks{E-mail address:
        marie.lafrance@umontreal.ca}\, and 
         {\bf \.{I}. Yurdu\c{s}en}$^{3}$
        \thanks{E-mail address:
        yurdusen@hacettepe.edu.tr}
    %     {\,, and \bf W. J. Zakrzewski}$^4$
     %   \thanks{E-mail address:
     %  w.j.zakrzewski@durham.ac.uk}
 \\
\\$^1$Centre de Recherches Math\'{e}matiques, 
Universit\'{e} de Montr\'{e}al,\\ CP 6128, Succ. Centre-Ville, Montr\'{e}al, 
Quebec H3C 3J7, Canada
\\
\\$^2$D\'{e}partement de Math\'{e}matiques et de Statistique, 
Universit\'{e} de Montr\'{e}al,\\ CP 6128, Succ. Centre-Ville, Montr\'{e}al, 
Quebec H3C 3J7, Canada
\\
   \\$^3$Department of Mathematics, Hacettepe University,
                    \\ 06800 Beytepe, Ankara, Turkey
%\\                     
 %  \\$^4$Department of Mathematical Sciences, University of Durham, 
                %       \\Durham DH1 3LE, United Kingdom
                       }

\date{\today}

\maketitle

\begin{abstract}

We explore the constant curvature holomorphic solutions of the supersymmetric  grassmannian sigma model $G(M,N)$ using in particular the gauge invariance of the model.  Supersymmetric invariant solutions are constructed via generalizing a known result for $\field{C}P^{N-1}$. We show that some other such solutions also exist. Indeed, considering the simplest case of $G(2,N)$ model, we give necessary and sufficient conditions for getting the constant curvature holomorphic solutions. Since, all the constant curvature holomorphic solutions of the bosonic $G(2,4)$ sigma model are known, we treat this example in detail.

\end{abstract}

Key words: supersymmetry (susy), grassmannian sigma model, gauge invariance

PACS numbers: 12.60.Jv, 02.10.Ud, 02.10.Yn

%%%%%%%%%%%%%%%%%%%%%%%%%%%%%%%%%%%%%%%%%%%%%%%%%%%%%%%%%%%%%%%%%%%%%%%%%%%%%%%%%%%%%%%%%%%%%%%%%%%%%%
\section{Introduction \label{intro}}
The Weierstrass representation of surfaces in multidimensional spaces
\cite{KonopelchenkoLandolfi, KonopelchenkoTaimanov1996, CK1996, Heleinbook}, 
such as Lie algebras and groups, has generated interest in studying surfaces associated with 
the solutions of the grassmannian bosonic $G(M,N)$ sigma model ($\sigma$-model) \cite{GY1, HYZ, PostG}. 
Motivated by the work dealing with $G(2, 4)$ \cite{4solutionsG(24)} and $G(2, 5)$
 \cite{solutionsG(25)}, a general 
approach for constructing holomorphic maps of 2-sphere $S^2$ of constant curvature into 
$G(M, N)$ have been realized in two papers 
\cite{ DHZ2013I, DHZ2}. 

%In these papers, the authors have given an explicit formula for the value of this curvature. 

Then in \cite{DHYZ, DHZ5}, most of the above ideas have been generalized to the 
supersymmetric (susy) $\field{C}P^{N-1}$ $\sigma$-model which is equivalent to the susy  $G(1, N)$.
In particular, all the susy invariant solutions with constant curvature 
of this model have been thoroughly discussed.

The natural question is
to extend those results to susy $G(M, N)$ $\sigma$-models for $M>1$. In order to 
achieve some canonical results we use the full power of the 
gauge invariance of these susy models. 
Indeed recently, present authors explored 
such type of invariance  \cite{HLYZ}. Although it
is well-known \cite{witten, susyG(MN), fujii1, fujii2, Zbook}, 
to our knowledge, up to now no explicit form of it had been
used in an effective way to analyse the solutions of the model. 
In particular, the gauge invariance of 
the susy model is much richer than its bosonic counterpart.
We will thus use it in the present paper in order 
to construct constant curvature holomorphic solutions of the susy 
$G(2,4)$ $\sigma$-model.

The structure of the article is as follows: In 
Section \ref{constancurvature}, we discuss the necessary and sufficient conditions to get the constant curvature holomorphic solutions of the general susy $G(M,N)$ $\sigma$-model. 
In Section \ref{examples} we give a detailed analysis of the susy $G(2,4)$ $\sigma$-model. Taking into account the susy gauge invariance we present all the holomorphic solutions of this 
model with constant curvature in a canonical form. In Section \ref{casecp5}, a well-known embedding of $G(2,4)$ into $\field{C}P^{5}$ is given in order to help to understand some arbitrariness in the susy solutions of $G(2,4)$.  Finally, we end
the article by giving some comments in Section \ref{conc}.

%%%%%%%%%%%%%%%%%%%%%%%%%%%%%%%%%%%%%%%%%%%%%%%%%%%%%%%%%%%%%%%%%%%%%%%%%%%%%%%%%%%%%%%%%%%%%%%%%%%%%%%%%%%%%%%%%%%%

%%%%%%%%%%%%%%%%%%%%%%%%%%%%%%%%%%%%%%%%%%%%%%%%%%%%%%%%%%%%%%
\section{Constant curvature holomorphic solutions of the susy $G(M,N)$ $\sigma$-model 
\label{constancurvature}}

%As mentioned in the introduction, holomorphic maps of $S^2$ of constant curvature into 
%the general bosonic grassmannian manifold G(M, N) have been presented in a series of papers 
%\cite{ DHZ2013I, DHZ2}, taken into account of general results for  $G(2,4)$ \cite{4solutionsG(24)} 
%and  $G(2,5)$  \cite{solutionsG(25)}.

For the susy $G(M,N)$ $\sigma$-model \cite{Zbook},  a general bosonic superfield $\Phi:\tilde{S}^2\mapsto G(M,N)$ has the following expansion
\begin{eqnarray}
\Phi(x_{\pm},\theta_{\pm}) =  \Phi_0(x_{\pm}) + 
i \theta_{+} \Phi_{1}(x_{\pm}) +
i \theta_{-} \Phi_{2}(x_{\pm}) - 
\theta_{+}\theta_{-} \Phi_3(x_{\pm}),
\label{intsuperbosefield}
\end{eqnarray}
where $\Phi_0$ and $\Phi_3$ are $N \times M$ bosonic complex matrices and 
$\Phi_1$ and $\Phi_2$ are $N \times M$ fermionic complex matrices. Here, $\tilde{S}^2$ 
denotes the superspace $(x_{\pm}, \theta_{\pm})$ whose bosonic part is compactified 
to $2$-sphere $S^2$. This bosonic superfield must satisfy
\begin{eqnarray}
\Phi^{\dagger} \Phi = \mathbb{I}_M.  \label{norm}
\end{eqnarray}

As in the bosonic case, holomorphic solutions of the susy $G(M,N)$ $\sigma$-model are trivial solutions of the model \cite{HLYZ, Zbook}. It has been shown that they take the form
\begin{eqnarray}
\Phi = W L,
\label{macfarlanesusy}
\end{eqnarray}
where $W$ is an $N\times M$ matrix depending only on the coordinates $(x_+,\theta_+)$ while $L$  is a non singular $M\times M$ matrix that depends on the coordinates 
$(x_\pm,\theta_\pm)$. It means that  the holomorphic superfield $W$ takes the explicit form 
\begin{eqnarray}
W(x_+, \theta_+) = Z(x_+) + i \theta_+ \eta(x_+) A(x_+)\,
\label{MFparasusy}
\end{eqnarray}
and the determination of the holomorphic solutions of the susy $G(M,N)$ $\sigma$-model is equivalent \cite{HLYZ, Zbook} to the study of these holomorphic superfields.

In the case of the susy $\field{C}P^{N-1}$ $\sigma$-model, the solutions 
of the susy Euler-Lagrange equations 
have been shown to be associated with surfaces \cite{Hussin2006}. 
The susy Gaussian curvature of the surface 
corresponding to the susy holomorphic solution $W$ 
was given by the formula 
\begin{eqnarray}
\tilde{\kappa} = -\frac{1}{\tilde{g}} \partial_+ \partial_- \ln \tilde{g}, 
\label{curvsusy}
\end{eqnarray}
where the supersymmetric expression of the metric was
\begin{eqnarray}
\tilde{g} = \partial_+ \partial_- \ln\left(\det (W^\dagger W)\right).
\label{metricsusy}
\end{eqnarray}
Clearly the metric and curvature may be functions of $(x_{\pm}, \theta_{\pm})$ even if  $W$ depends only on the coordinates $(x_+,\theta_+)$. 

For the case of susy $G(M,N)$, we assume the same relation between the superfield $W$, metric $\tilde{g}$ and curvature $\tilde{\kappa}$. It means that asking for a constant curvature solution is 
equivalent to assuming that $\tilde{\kappa} = \kappa$ where 
$\kappa$ is a purely bosonic constant (a strictly positive real number) and 
must be the  curvature associated with the purely 
bosonic $G(M,N)$ solution $Z$ involved in $W= (\ref{MFparasusy})$. 

%We thus obtain 
%the condition that has to be satisfied
%\begin{eqnarray}
%\partial_+ \partial_- \ln \tilde{g} + \kappa\,\tilde{g} = 0.
%\label{conditionKoforconcur}
%\end{eqnarray}

Let us write explicitly the condition (\ref{curvsusy}) using the expression of $W$ in (\ref{MFparasusy}) and taking into account that  $\tilde{\kappa} = \kappa$. In order to simplify the calculations, we take
\begin{eqnarray}
T_1 = \theta_+ \eta, \quad  T_2 = \theta_- \eta^{\dagger}. 
\label{T}
\end{eqnarray}
Notice that since $T_1$ and $T_2$ are both product of 
two fermionic functions, we have $T_1^2=0$ and $T_2^2=0$. Moreover, 
they are bosonic quantities and hence commute with all the other quantities.

We thus easily get
\begin{eqnarray}
\det \left(W^\dagger W\right) &=& (\det M_0) \det \left(\mathbb{I}_M 
+ i T_1 M_0^{-1} M_1 + i T_2 M_0^{-1} M_2 - T_1 T_2 M_0^{-1} M_3\right) \nonumber \\
&=& (\det M_0) \left(1 + i T_1 X_1 + i T_2 X_2 - T_1 T_2 X_3\right),
\label{detWdagW}
\end{eqnarray}
with 
\begin{eqnarray}
M_0 = Z^{\dagger} Z\,, \quad M_1 = Z^{\dagger} A\,, \quad M_2 = A^{\dagger} Z\,, \quad M_3 = A^{\dagger} A\,. 
\label{reviseddefineMV}
\end{eqnarray}
The expressions of $X_1, X_2$ and $X_3$ remain to be explicitly computed.

The metric $\tilde{g}=$ (\ref{metricsusy}) takes the form
\begin{eqnarray}
\tilde{g} = g
+ \partial_+ \partial_- \ln \left(1 + i T_1 X_1 + i T_2 X_2 - T_1 T_2 X_3 \right),
\label{metricexpanded}
\end{eqnarray}
with 
\begin{eqnarray}
g = \partial_+ \partial_- \ln\left(\det (Z^\dagger Z)\right).
\label{metricbos}
\end{eqnarray}
Using the Taylor expansion of the logarithmic function 
\begin{eqnarray}
\ln \left(1 + x \right) = x - \frac{x^2}{2} + \mathcal{O}\hspace{0.2mm}(x^3),
\label{Taylorlog}
\end{eqnarray}
we get
\begin{eqnarray}
\tilde{g}= g+ 
\partial_+ \partial_- \left[i T_1 X_1 + i T_2 X_2 - T_1 T_2 \left(X_3-X_1X_2\right)\right].
\label{metricsimplified}
\end{eqnarray}

By a similar procedure we can express the quantity 
$\partial_+ \partial_- \ln \tilde{g}$ as;
\begin{eqnarray}	
\partial_+ \partial_- \ln \tilde{g} &=& 
\partial_+ \partial_- \ln g
+ i T_1\, \partial_+ \partial_-\, Y_1 + i T_2\, \partial_+ \partial_-\, Y_2 \nonumber \\ 
&-& T_1\, T_2\, \partial_+ \partial_-\,\big(Y_3 - Y_1 Y_2\big)\,,
\label{mixedderoflog}
\end{eqnarray}	 
with 
\begin{eqnarray}
&&Y_1 \equiv \frac{\kappa}{2} (1+|x|^2)^2 \partial_+ \partial_- X_1\,, \nonumber \\
&&Y_2 \equiv \frac{\kappa}{2} (1+|x|^2)^2 \partial_+ \partial_- X_2\,, \nonumber \\
&&Y_3 \equiv \frac{\kappa}{2} (1+|x|^2)^2 \partial_+ \partial_- \big(X_3 - X_1\,X_2\big)\,. 
\label{definitionY123}
\end{eqnarray}	

Upon inserting these relations into (\ref{curvsusy}) we get the following constraints 
\begin{equation}
\partial_+ \partial_- \ln g+ \kappa g=0
\label{condg} 
\end{equation}
and
\begin{eqnarray}
&&\partial_+ \partial_- \left(Y_1 +\kappa X_1\right) = 0\,, 
\qquad \partial_+ \partial_- \left(Y_2 + \kappa X_2\right) = 0\,, \label{necsufconditions12} \\
&&\partial_+ \partial_- \left((Y_3 - Y_1\,Y_2) + \kappa (X_3 - X_1\,X_2)\right) = 0\,. 
\label{necsufconditions3} 
\end{eqnarray}

Notice that the two expressions in (\ref{necsufconditions12}) are complex 
conjugate to each other and hence we have only one independent condition, say the one involving $Y_1$ and $X_1$. 

These are necessary and sufficient conditions for the susy holomorphic solutions to have 
a constant Gaussian curvature and will be the fundamental equations for our analysis. 

In the following subsection, we take the special case of susy invariant solutions of $G(M,N)$ and show that it solves our problem.
%%%%%%%%%%%%%%%%%%%%%%%%%%%%%%%%%%%%%%%%%%%%%%%%%%%%%%%%%%%%%%%%%%%%%%%%%%%%%%%%%%%%%

%%%%%%%%%%%%%%%%%%%%%%%%%%%%%%%%%%%%%%%%%%%%%%%%%%%%%%%%%%%%%%%%%%%%%%%%%%%%%%%%%%%%%%%%%%%%%%%%%%%%%%%%%%%%%

%%%%%%%%%%%%%%%%%%%%%%%%%%%%%%%%%%%%%%%%%%%%%%%%%%%%%%%%%%%%%%%%%%%%%%%%%%%%%%%%%%%%%%%%%%%%%%%%%%%%%%%%%%%%%
\subsection{Susy invariant solutions}

We now give a sufficient condition for obtaining a constant curvature solution.

Let us first recall that, in the particular case $M = 1$, we have already shown 
that  the susy holomorphic solutions with constant curvature take the form 
(up to gauge transformations)\cite{DHYZ}
\begin{eqnarray}
\omega(x_+,\theta_+) = u(x_+) + i \theta_+ \eta(x_+) \partial_+ u(x_+)\,, 
\label{susyextofhol}
\end{eqnarray}
where $u(x_+)^T = (u_1(x_+), \dots, u_{N-1}(x_+) )$ is the Veronese sequence with
\begin{eqnarray}
u_n(x_+) = \sqrt{\left(\begin{array}{c}
N-1 \\
n \end{array}\right)}\, x_+^n\,, 
\quad n = 0,1,2, \ldots, N-1\,.
\label{revisedefinesusyuxnew}
\end{eqnarray}
Such a solution is called susy invariant  \cite{DHYZ}  since, using Taylor expansion, 
we have $\omega(x_+,\theta_+)=\omega(y_+)$ with $y_+= x_++ i \theta_+ \eta(x_+)$ being a susy translated variable. 

In this section, we 
generalize this result to susy grassmannian $G(M,N)$ $\sigma$-model. 
Indeed, assuming that the susy holomorphic solution is similarly given by
\begin{eqnarray}
W(x_+,\theta_+) = Z(x_+) + i \theta_+ \eta(x_+) \partial_+ Z(x_+),
\label{susyholsol}
\end{eqnarray}
{\it i.e.} $A(x_+)=  \partial_+ Z(x_+)$ in (\ref{MFparasusy}) and keeping 
in mind that the holomorphic solution of the bosonic 
grassmannian $G(M,N)$ $\sigma$-model is written in the MacFarlane 
parametrization \cite{MacFarlane}, we can rewrite (\ref{susyholsol}) as 
\begin{eqnarray}
W = \left(\begin{array}{c}
\mathbb{I}_M \\
K + i \theta_+ \eta \partial_+ K
\end{array}\right).
\label{macfarlane_W}
\end{eqnarray}
Here we have taken into account the susy gauge invariance \cite{HLYZ}.
Since the curvature and metric associated with such a solution are given by (\ref{curvsusy}) 
and (\ref{metricsusy}) respectively, we compute first the determinant of the matrix 
$ W^\dagger W$ which could be written as 
\begin{eqnarray}
W^\dagger W 
= \left(1 + \mathcal{D}\right)\left(\mathbb{I}_M + K^\dagger K\right),
\label{detWW_IK}
\end{eqnarray}
where the differential operator $\mathcal{D}$ is given by 
\begin{eqnarray}
\mathcal{D} = i T_1 \partial_+ 
+ i T_2 \eta^{\dagger} \partial_- 
- T_1 T_2  \partial_+ \partial_- ,
\label{definitionoperatorD}
\end{eqnarray}
using the notation introduced in (\ref{T}).

In order to proceed with the determinant, we use the following lemma which is proven in the Appendix \ref{seclemma1}. 
\begin{lemma}
Let $\mathcal{D}$ be the operator defined in (\ref{definitionoperatorD}) and $B(x_+,x_-)$ 
is an $M\times M$ bosonic matrix. Then we have 
\begin{eqnarray}
\det \left[\left(1 + \mathcal{D} \right) B(x_+,x_-) \right] 
= \left(1 + \mathcal{D} \right) \det\left[B(x_+,x_-)\right].
\label{eqforlemma1}
\end{eqnarray}
\label{lemma1}
\end{lemma}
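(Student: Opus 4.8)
The plan is to prove the multiplicativity-type identity $\det[(1+\mathcal{D})B]=(1+\mathcal{D})\det B$ by exploiting the nilpotent structure of the operator $\mathcal{D}$. The crucial observation is that $\mathcal{D}=iT_1\partial_+ + iT_2\eta^\dagger\partial_- - T_1T_2\partial_+\partial_-$ is built entirely from the bosonic-but-nilpotent quantities $T_1,T_2$ satisfying $T_1^2=T_2^2=0$ (as noted after equation (\ref{T})). Consequently $\mathcal{D}^2$ collapses dramatically: the only surviving term in $\mathcal{D}^2$ comes from the cross product of the $T_1$ and $T_2$ pieces, since any term containing $T_1^2$ or $T_2^2$ vanishes and the $T_1T_2$ piece squares to zero. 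First I would compute $\mathcal{D}^2$ explicitly and verify that it reduces to $-2T_1T_2\,\eta^\dagger\,\partial_+\partial_-$ acting as a second-order differential operator, and that $\mathcal{D}^3=0$. This means $(1+\mathcal{D})$ behaves like a truncated exponential, and expanding any function of $B$ in powers of $\mathcal{D}$ terminates after the $T_1T_2$ order.

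The second step is to set up the determinant expansion. I would write $(1+\mathcal{D})B = B + \mathcal{D}B$ and expand $\det$ using the fact that $\mathcal{D}B$ is "small" in the sense that its entries carry factors of $T_1$, $T_2$, or $T_1T_2$, each of which squares to zero. The standard expansion $\det(B+E)=\det B + \mathrm{tr}(\mathrm{adj}(B)\,E) + \dots$ truncates because only terms up to total degree $T_1T_2$ can survive; all higher wedge-type contributions vanish by nilpotency. So the task reduces to matching, order by order in $1$, $T_1$, $T_2$, and $T_1T_2$, the coefficients on the two sides of (\ref{eqforlemma1}).

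The main obstacle will be the $T_1T_2$ coefficient, where the nonlinearity of the determinant and the second-order operator $\partial_+\partial_-$ interact. On the right-hand side, the $T_1T_2$ term of $(1+\mathcal{D})\det B$ involves $\partial_+\partial_-\det B$ together with contributions from the cross-term in $\mathcal{D}^2$. On the left-hand side, I would need to collect the $T_1T_2$ piece arising from both the linear term $\mathrm{tr}(\mathrm{adj}(B)\,\mathcal{D}B)$ and the quadratic term in the determinant expansion, where the product of the $T_1\partial_+ B$ and $T_2\eta^\dagger\partial_- B$ contributions generates a genuine cross term. The key computation is verifying that the Leibniz-type identity $\partial_+\partial_-\det B = \mathrm{tr}(\mathrm{adj}(B)\partial_+\partial_-B) + (\text{derivative-of-adjugate terms})$ precisely reproduces the left-hand side. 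Here I would use Jacobi's formula $\partial\det B = \mathrm{tr}(\mathrm{adj}(B)\,\partial B) = (\det B)\,\mathrm{tr}(B^{-1}\partial B)$ and its second-derivative analogue, so that the logarithmic-derivative structure makes the cross terms align automatically.

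Alternatively, and likely more cleanly, I would reformulate via the logarithm: since $\det[(1+\mathcal{D})B]=\det B\cdot\det[\mathbb{I}+B^{-1}\mathcal{D}B]$ and $B^{-1}\mathcal{D}B$ inherits the same nilpotent $T$-structure, I would use $\det[\mathbb{I}+\mathcal{E}]=\exp\,\mathrm{tr}\,\ln[\mathbb{I}+\mathcal{E}]$ with the series terminating at second order in the nilpotents. Expanding the trace-log and re-exponentiating, both truncating after the $T_1T_2$ term, should collapse to exactly $(1+\mathcal{D})\det B$ after recognizing the cross terms. The expectation is that everything hinges on the single nontrivial identity at order $T_1T_2$, and I would devote the bulk of the argument to that coefficient while dispatching the $1$, $T_1$, and $T_2$ orders as immediate consequences of Jacobi's formula.
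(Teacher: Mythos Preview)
Your plan is correct and would lead to a valid proof, but it takes a noticeably different route from the paper's. The paper never touches Jacobi's formula or the adjugate. Instead it isolates a single \emph{scalar} identity:
\[
(1+\mathcal{D})(fg)\;=\;\big[(1+\mathcal{D})f\big]\,\big[(1+\mathcal{D})g\big],
\]
which follows from the nilpotency $T_1^2=T_2^2=0$ together with the Leibniz relation $\mathcal{D}_2(fg)=f(\mathcal{D}_2 g)+(\mathcal{D}_2 f)g+(\mathcal{D}_1 f)(\mathcal{D}_1 g)$, where $\mathcal{D}=\mathcal{D}_1+\mathcal{D}_2$ is split into its first- and second-order pieces. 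In other words, $1+\mathcal{D}$ is \emph{multiplicative} on bosonic functions (it behaves like evaluation at a susy-shifted point). Once this is established for two factors, the Lemma is immediate: expand $\det\big[(1+\mathcal{D})B\big]$ via the permutation sum, apply the multiplicativity iteratively along each monomial $b_{1\nu_1}\cdots b_{M\nu_M}$, and pull $(1+\mathcal{D})$ outside the sum.

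Your approach instead expands $\det(B+\mathcal{D}B)$ to second order in the nilpotents and matches the $T_1T_2$ coefficient against $\partial_+\partial_-\det B$ using the second derivative of Jacobi's formula. This works (the required identity is exactly $\partial_+\partial_-\det B=\det B\big[\mathrm{tr}(B^{-1}\partial_+\partial_-B)+\mathrm{tr}(B^{-1}\partial_+B)\,\mathrm{tr}(B^{-1}\partial_-B)-\mathrm{tr}(B^{-1}\partial_+B\,B^{-1}\partial_-B)\big]$), but it hides the algebraic reason the lemma holds. The paper's proof makes the mechanism transparent --- multiplicativity of $1+\mathcal{D}$ --- and thereby handles arbitrary $M$ with no additional work, whereas your trace-log/adjugate computation implicitly rederives this multiplicativity in a specific polynomial (the determinant) rather than proving it once for all products. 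Your route has the modest advantage of not needing the permutation expansion, but the paper's is shorter and conceptually cleaner.

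One minor slip: your claimed form $\mathcal{D}^2=-2T_1T_2\,\eta^\dagger\,\partial_+\partial_-$ carries an extraneous $\eta^\dagger$ (likely inherited from the typo in (\ref{definitionoperatorD})); the correct reduction is $\mathcal{D}^2=-2T_1T_2\,\partial_+\partial_-=2\mathcal{D}_2$. This does not affect the validity of your argument.
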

Replacing $B(x_+,x_-)$ by $\left(\mathbb{I}_M + K^\dagger K\right)$ in the above Lemma, 
we get
\begin{eqnarray}
\det (W^\dagger W) = 
\left(1+\mathcal{D}\right)\det\left(\mathbb{I}_M + K^\dagger K\right).
\label{implicationoflemma}
\end{eqnarray}

Now we can give the following theorem. 
\begin{theorem}
Let us assume that $Z:S^2\rightarrow G(M,N)$, 
is a holomorphic solution of the bosonic 
Euler-Lagrange equations 
associated with a constant Gaussian curvature surface. 
Its susy invariant holomorphic extension (\ref{susyholsol})
is also associated with a constant Gaussian curvature surface of the same curvature. 
\label{theorem1}
\end{theorem}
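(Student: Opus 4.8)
The plan is to show that the susy invariant extension (\ref{susyholsol}) satisfies the three necessary and sufficient conditions (\ref{condg}), (\ref{necsufconditions12}) and (\ref{necsufconditions3}) with the bosonic value $\tilde{\kappa}=\kappa$. Condition (\ref{condg}) is exactly the statement that the bosonic solution $Z$ carries constant Gaussian curvature $\kappa$, so it holds by hypothesis and the whole argument collapses onto the two fermionic conditions. Conceptually one expects these to be automatic: since $A=\partial_+Z$, a Taylor expansion in the nilpotent variable gives $W(x_+,\theta_+)=Z(x_++i\theta_+\eta)=Z(y_+)$, i.e. $W$ is the constant curvature bosonic map read in the susy-shifted holomorphic coordinate $y_+$, and the Gaussian curvature is insensitive to such a holomorphic reparametrization. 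The computation below makes this precise.

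First I would extract $X_1,X_2,X_3$. Writing $f=\det(Z^\dagger Z)=\det(\mathbb{I}_M+K^\dagger K)$, the factorization (\ref{implicationoflemma}) provided by Lemma~\ref{lemma1} gives $\det(W^\dagger W)=(1+\mathcal{D})f$. Expanding $\mathcal{D}$ from (\ref{definitionoperatorD}) and matching term by term against the general expansion (\ref{detWdagW}), in which $\det M_0=f$ by (\ref{reviseddefineMV}), one reads off
\[ X_1=\partial_+\ln f,\qquad X_2=\partial_-\ln f,\qquad X_3=\frac{\partial_+\partial_- f}{f}. \]
Recalling $g=\partial_+\partial_-\ln f$ from (\ref{metricbos}), this immediately yields the two identities that carry the whole proof,
\[ \partial_- X_1=\partial_+ X_2=g,\qquad X_3-X_1X_2=\partial_+\partial_-\ln f=g. \]

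Feeding these into the definitions (\ref{definitionY123}) and using the constant curvature normalization $\tfrac{\kappa}{2}(1+|x|^2)^2=g^{-1}$ built into them, the auxiliary quantities collapse to logarithmic derivatives of the bosonic metric,
\[ Y_1=\partial_+\ln g,\quad Y_2=\partial_-\ln g,\quad Y_3-Y_1Y_2=\frac{\partial_+\partial_- g}{g}-\frac{\partial_+ g\,\partial_- g}{g^2}=\partial_+\partial_-\ln g. \]
Both fermionic conditions then reduce to the Liouville equation (\ref{condg}). Indeed $\partial_-(Y_1+\kappa X_1)=\partial_+\partial_-\ln g+\kappa g=0$, so $Y_1+\kappa X_1$ is independent of $x_-$ and (\ref{necsufconditions12}) holds; likewise the bracket in (\ref{necsufconditions3}) equals $(Y_3-Y_1Y_2)+\kappa(X_3-X_1X_2)=\partial_+\partial_-\ln g+\kappa g$, which vanishes identically, so its mixed derivative is trivially zero. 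Since (\ref{condg}) holds by assumption, all the conditions are met and $\tilde{\kappa}=\kappa$.

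The step demanding the most care is the clean read-off of $X_1,X_2,X_3$: one has to track the nilpotent bosonic objects $T_1=\theta_+\eta$ and $T_2=\theta_-\eta^\dagger$, their commutation with $\partial_\pm$, and the $x_+$-dependence of $\eta$, so that the operator $\mathcal{D}$ reproduces precisely the coefficients of (\ref{detWdagW}). Once the identities $\partial_- X_1=g$ and $X_3-X_1X_2=g$ are secured, the fermionic conditions contribute nothing beyond the bosonic constant curvature equation, which is the genuine content of Theorem~\ref{theorem1}.
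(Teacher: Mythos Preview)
Your proof is correct, and it takes a genuinely different route from the paper's own argument. You exploit the necessary and sufficient conditions (\ref{condg}), (\ref{necsufconditions12}), (\ref{necsufconditions3}) that the paper has already set up, read off $X_1=\partial_+\ln f$, $X_2=\partial_-\ln f$, $X_3=f^{-1}\partial_+\partial_-f$ directly from Lemma~\ref{lemma1} and (\ref{detWdagW}), and then observe that the entire fermionic content of the conditions collapses to the bosonic Liouville equation $\partial_+\partial_-\ln g+\kappa g=0$. This is clean and short, and it makes transparent \emph{why} the susy invariant extension inherits the same curvature: nothing new can appear because the $X_i$ and $Y_i$ are just the logarithmic derivatives of $f$ and $g$.

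The paper, by contrast, does not route through the $X_i,Y_i$ framework at all. It works operatorially: starting from $\det(W^\dagger W)=(1+\mathcal{D})R$ with $R=(1+|x|^2)^r$, it proves $\ln[(1+\mathcal{D})R]=(1+\mathcal{D})\ln R$, then shows that commuting $\partial_+\partial_-$ past $(1+\mathcal{D})$ produces an auxiliary operator $\mathcal{D}_\eta$ (which carries the $x_+$-dependence of $\eta$), giving $\tilde g=(1+\mathcal{D}+\mathcal{D}_\eta)g$ and $\partial_+\partial_-\ln\tilde g=(1+\mathcal{D}+\mathcal{D}_\eta)(\partial_+\partial_-\ln g)$. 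The conclusion $\tilde\kappa=\kappa$ then drops out. That approach is more self-contained but requires tracking $\mathcal{D}_\eta$ through two Taylor expansions and an appendix; your approach offloads all of that bookkeeping onto the general conditions already derived, at the cost of relying on the normalization $\tfrac{\kappa}{2}(1+|x|^2)^2=g^{-1}$ implicit in (\ref{definitionY123}). One small remark: your closing paragraph about needing care with the $x_+$-dependence of $\eta$ is precisely the complication that the \emph{paper's} route has to handle via $\mathcal{D}_\eta$; your route sidesteps it entirely, since the conditions (\ref{necsufconditions12})--(\ref{necsufconditions3}) are already stated purely in terms of $X_i,Y_i$.
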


\begin{proof}
By hypothesis, $Z$ is a holomorphic solution of the bosonic 
model associated with a constant Gaussian curvature surface. It means that \cite{DHZ2013I} there exists an integer $r$ such that 
\begin{equation}
\det \left(Z^\dagger Z\right) = \det\left(\mathbb{I}_M 
+ K^\dagger K\right) = R = 
\left(1 + |x|^2\right)^r\,,
\label{thm1res1}
\end{equation}
for some positive integer $r$ and thus 
$\kappa=\frac{2}{r}$.

In order to get the expression of the metric (\ref{metricsusy}) we first show that 
\begin{equation}
\ln\left[\left(1+\mathcal{D}\right)R\right] = \left(1+\mathcal{D}\right)\ln R.
\label{lnR}
\end{equation}
Using the Taylor expansion of the logarithmic function (\ref{Taylorlog})
and applying it with $x=\frac{1}{R}\mathcal{D} R$, only the first two 
terms of the expansion contribute because $(\mathcal{D}R)^3=0$. We thus get
\begin{eqnarray}
&\ln\left[\left(1+\mathcal{D}\right)R\right] = 
\ln R + \ln\left[1+\frac{1}{R}\mathcal{D} R\right]
= \ln R+ \frac{1}{R}\mathcal{D} R - \frac{1}{2} (\frac{1}{R}\mathcal{D} R)^2\,, \nonumber \\
& = \ln R + i T_1 ( \frac{1}{R}\partial_+ R) 
+ i T_2 (\frac{1}{R}\partial_- R )
-T_1 T_2 \left(\frac{1}{R}\partial_+ 
\partial_- R - \frac{1}{R^2} 
(\partial_+ R) (\partial_- R)\right)\,, \nonumber \\
&= \left(1+\mathcal{D}\right)\ln R.
\label{thm1proof1eq2}
\end{eqnarray}

The next step is to show that (see Appendix \ref{appthm})
\begin{eqnarray}
\tilde{g} &=& \partial_+ \partial_- \ln\left[\left(1+\mathcal{D}\right)R\right] 
= \partial_+ \partial_-  \big(\left(1+\mathcal{D}\right)\ln R \big)\,, \nonumber \\
&=&\left(1+\mathcal{D}+\mathcal{D_\eta} \right)(\partial_+ \partial_-\ln R)\,,
\label{metricthm1proof2}
\end{eqnarray}
where
\begin{eqnarray}
\mathcal{D_\eta} = i  \theta_+ (\partial_+ \eta) 
+ i \theta_- (\partial_- \eta^{\dagger})
- \theta_+ \theta_- \Big((\partial_- \eta^{\dagger}) 
(\partial_+\eta) + (\partial_- \eta^{\dagger}) 
\eta \partial_+ + \eta^{\dagger} (\partial_+ \eta) \partial_- \Big)\,.
\label{rdefmatcaleta}
\end{eqnarray}
Thus the metric becomes
\begin{equation}
\tilde{g}=\left(1+\mathcal{D}+\mathcal{D_\eta} \right)g\,,
\label{gtildefin}
\end{equation}
and (see Appendix \ref{appthm})
\begin{equation}
\ln \tilde{g} = \ln \left[
\left(1 + \mathcal{D} + \mathcal{D_\eta}\right) g \right] 
= (1 + \mathcal{D}) \ln g 
+ i  \theta_+ (\partial_+ \eta) 
+ i \theta_- (\partial_- \eta^{\dagger})\,.
\label{proof324}
\end{equation}
Taking the mixed derivative, we get 
\begin{equation}
\partial_+ \partial_- \ln \tilde{g} = 
\partial_+ \partial_- \Big((1 + \mathcal{D}) \ln g \Big) 
= \left(1 + \mathcal{D} + \mathcal{D_\eta} \right) (\partial_+ \partial_- \ln g),
\end{equation}
a similar result as in (\ref{metricthm1proof2}).
Using the expression of the susy Gaussian curvature (\ref{curvsusy}) and the fact 
that $\partial_+ \partial_- \ln g = -\kappa g$, we finally get
\begin{equation}
- \tilde{\kappa} \tilde{g} = \partial_+ \partial_- \ln \tilde{g} 
= - \kappa \left(1 + \mathcal{D} + \mathcal{D_\eta}\right)  g\,,
\end{equation}
with $\tilde{g}$ given in (\ref{gtildefin}). We conclude that $\tilde{\kappa}=\kappa=\frac{2}{r}$.
\end{proof}

%%%%%%%%%%%%%%%%%%%%%%%%%%%%%%%%%%%%%%%%%%%%%%%%%%%%%%%%%%%%%%%%%%%%%%%%%%%%%%%%%%%%%%%%%%%%%%%%%%%%%%%%%%%%%%
\section{Constant curvature holomorphic solutions of the susy $G(2,4)$ $\sigma$-model 
\label{examples}}

In this section we present all the constant curvature holomorphic 
solutions of the susy $G(2,4)$-$\sigma$-model in a canonical form. 

First for the case of $G(2,N)$, the matrices $M_0, \ M_1, \ M_2, \ M_3$ are $2\times 2$ and the quantities $X_1, \ X_2,\ X_3$ are easily computed from (\ref{detWdagW}). We thus get
\begin{eqnarray}
X_1 &= R^{-1} \left[(M_0)_{11}(M_1)_{22}+(M_0)_{22}(M_1)_{11}-(M_0)_{12}(M_1)_{21}-(M_0)_{21}(M_1)_{12}\right], 
\label{RX1X2X3def1} \\
X_2 &= R^{-1} \left[(M_0)_{11}(M_2)_{22}+(M_0)_{22}(M_2)_{11}-(M_0)_{12}(M_2)_{21}-(M_0)_{21}(M_2)_{12}\right],
\label{RX1X2X3def2}\\
X_3 &= R^{-1} [(M_0)_{11}(M_3)_{22}+(M_0)_{22}(M_3)_{11}+(M_1)_{11}(M_2)_{22}+(M_1)_{22}(M_2)_{11} \nonumber \\
&- (M_0)_{12}(M_3)_{21}-(M_0)_{21}(M_3)_{12}-(M_1)_{12}(M_2)_{21}-(M_1)_{21}(M_2)_{12}].
\label{RX1X2X3def}
\end{eqnarray}
with $M_0, \ M_1,  \ M_2,  \ M_3$ given in (\ref{reviseddefineMV}) and $R \equiv \det Z^\dagger Z = \det M_0$.

In \cite{4solutionsG(24)} it has been shown that, up to a $U(4)$ gauge 
transformation all the constant curvature holomorphic solutions of the 
purely bosonic case are given by 
\begin{equation}
\begin{split}
&Z_1=\left(\begin{array}{cc}
1&0\\
0&1\\
x_+&0\\
0&0\\
\end{array}\right), \  Z_2=\left(\begin{array}{cc}
1&0\\
0&1\\
x_+^2 \cos{2t}&\sqrt{2}x_+\cos{t}\\
\sqrt{2}x_+\sin{t}&0\\
\end{array}\right)\,, \quad t\in \mathbb{R} \\[3ex]
&Z_3=\left(\begin{array}{cc}
1&0\\
0&1\\
\sqrt{3}x_+^2&\sqrt{8/3}x_+ \\
0&\sqrt{1/3}x_+\\
\end{array}\right), \  Z_4=\left(\begin{array}{cc}
1&0\\
0&1\\
2x_+^3&\sqrt{3}x_+^2\\
\sqrt{3}x_+
^2&2x_+\\
\end{array}\right)\,.
\end{split}
\label{Z1234}
\end{equation}

Searching for the constant curvature 
holomorphic solutions of the susy model, we 
generalize them in the following way 
\begin{eqnarray}
W_r(x_+) = Z_r(x_+) + i \theta_+ \eta(x_+) A_r(x_+), \qquad r=1,2,3,4,
\label{extholZi}
\end{eqnarray}
where the different  $Z_r$ are given by (\ref{Z1234}). Our aim is to determine the most general 
matrices $A_r(x_+)$ that satisfy the conditions of having a constant curvature. Using the 
gauge invariance of the susy model \cite{HLYZ}, we take 
\begin{equation}
A_r (x_+) = 
\left(\begin{array}{cc}
0 & 0\\
0 & 0\\
\beta_{11}(x_+) & \beta_{12}(x_+)\\
\beta_{21}(x_+) & \beta_{22}(x_+)\\
\end{array}\right)
= \left( \begin{array}{c}
0 \\
\beta(x_+)
\end{array} \right) \,.
\label{ArReducedgeneral}
\end{equation} 
Since the solutions $ Z_r(x_+) $ are all real functions of $x_+$, we assume that it is also the case for $A_r(x_+)$. 
For each holomorphic solution $W_r(x_+)$ given in (\ref{extholZi}), 
the conditions (\ref{necsufconditions12}) and (\ref{necsufconditions3}) 
have to be satisfied. We investigate each of these cases separately. 
Interestingly for $W_3$ and $W_4$, the only solutions are the susy invariant ones. 
However, it is not true for  $W_1$ and $W_2$.

\subsection{The case of $Z_1$}
\label{caseZ1}
This is the simplest solution of the bosonic $G(2,4)$ model with 
$\det Z_1^{\dagger} Z_1 = \left(1 + |x|^2 \right)$, {\it i.e.;} $r=1$ or $\kappa=2$. 
It is easy to see that the condition (\ref{necsufconditions12}) is trivially 
satisfied for $W_1$ given in (\ref{extholZi}). Hence we are left with the condition (\ref{necsufconditions3}). 
It reads as 
\begin{equation}
|x_+ (\partial_+^2 \beta_{22}) + 2 (\partial_+ \beta_{22})|^2 
+|\partial_+^2 \beta_{22}|^2
+|\partial_+^2 \beta_{12}|^2
+|\partial_+^2\beta_{21}|^2 = 0.
\label{examZ1perfectsqr}
\end{equation}	
Since $\beta_{11}$ does not appear in this equation, it will remain arbitrary. Equation (\ref{examZ1perfectsqr}) 
implies that 
\begin{equation}
\partial_+^2 \beta_{12} = 0,
\quad
\partial_+^2 \beta_{21} = 0,
\quad
\partial_+^2 \beta_{22} = 0,
\quad
x_+ (\partial_+^2 \beta_{22}) + 2(\partial_+ \beta_{22}) = 0, 
\label{examZ1impli}
\end{equation}
which further fix the matrix $A_1$. We thus get constant curvature susy holomorphic solutions of the form
\begin{eqnarray}
W_1 = \left(\begin{array}{cc}
1 & 0\\
0 & 1\\
x_+ & 0\\
0 & 0\end{array}\right) + 
i \theta_+ \eta 
\left(\begin{array}{cc}
0 & 0\\
0 & 0\\
\beta_{11}(x_+) & b_1 x_+ + b_0 \\
c_1 x_+ + c_0 & d_0
\end{array}\right), 
\label{exam1W1final}
\end{eqnarray}
where $b_1$, $b_0$, $c_1$, $c_0$ and $d_0$ are arbitrary constants.
 Notice that when 
$b_0 = b_1 = c_0 = c_1 = d_0 = 0$, we get in particular the susy invariant solution. 
It is clear that we have more solutions than the susy invariant one in this case.

%%%%%%%%%%%%%%%%%%%%%%%%%%%%%%%%%%%%%%%%%%%%%%%%%%%%%%%%%%%%%%%%%%%%%%%%%%%%%%%%%%

%%%%%%%%%%%%%%%%%%%%%%%%%%%%%%
\subsection{The case of $Z_2$}
\label{caseZ2}
We have a family of bosonic solutions, labeled by the parameter $t$: 
\begin{eqnarray}
Z_2 (x_+, t) = 
\left(\begin{array}{c} 
\mathbb{I}_2 \\
K_2(t)
\end{array}\right),
\qquad
K_2 (x_+, t) = 
\left(\begin{array}{cc} 
x_+^2 \cos{2t} & \sqrt{2}x_+ \cos{t} \\
\sqrt{2}x_+ \sin{t} & 0 \\
\end{array}\right).
\label{Z2begin}
\end{eqnarray}
Since
\begin{eqnarray}
\det Z_2^{\dagger} Z_2 = 
\det\left(\mathbb{I}_2 + K_2^{\dagger} K_2 \right) = 
\left(1 + |x|^2 \right)^2,
\label{examdetZ2}
\end{eqnarray}
the associated curvature is $\kappa=1$.

In \cite{4solutionsG(24)}, the parameter $t$ can take any real 
values but due to the properties of the trigonometric functions, 
using a residual gauge invariance, 
we have been able to show (see Appendix C) that 
$t \in [0, \pi[$.

Considering now the corresponding susy holomorphic solution
\begin{eqnarray}
W_2(x_+, \theta_+, t) = Z_2(x_+, t) + i \theta_+ \eta(x_+) A_2(x_+, t),
\label{examZ2susysol}
\end{eqnarray}
where  $A_2(x_+, t)$ takes the form (\ref{ArReducedgeneral}), 
the conditions (\ref{necsufconditions12}) and (\ref{necsufconditions3}) 
have to be satisfied in order to get a family of constant curvature solutions. 

Introducing $W_2$ given in (\ref{examZ2susysol}) into 
(\ref{necsufconditions12}), we get two different cases: 
\begin{enumerate}
\item The first case corresponds to $\cos 2t \neq 0$. Condition (\ref{necsufconditions12}) 
implies 
\begin{eqnarray}
\beta_{11}(x_+, t) = 
x_+ \big(\sqrt{2} \cos{t} \beta_{12}(x_+, t) - \sqrt{2}
\sin{t} \beta_{21}(x_+, t) + x_+ \sin{2t} \beta_{22}(x_+, t) \big).
\label{exambeta11Z2}
\end{eqnarray}
So we have only one condition (\ref{necsufconditions3}) to 
resolve three unknown functions. Interestingly, starting with 
a polynomial form in $x_+$ of the unknown functions we get a 
pattern. Indeed, we find that 
\begin{eqnarray}
\beta_{12}(x_+, t) &=& c_0 + c_1 x_+ + F(x_+)\,, \label{examgenB12} \\
\beta_{21}(x_+, t) &=& \big(c_0 + F(x_+) \big) \tan t + a_1 x_+\,, \label{examgenB21} \\
\beta_{22}(x_+, t) &=& \frac{\cos t}{\sqrt{2}} \big(a_1 - c_1 \tan t \big)\,, \label{examgenB22} 
\end{eqnarray}
where $a_1, c_0$ and $c_1$ are constants, solve our problem. Thus the matrix $\beta(x_+, t)$ 
takes the form 
\begin{eqnarray}
\beta(x_+, t) &\!\! = \!\!& 
\frac{(c_0 + F(x_+))}{\sqrt{2} \cos t} 
\left(\begin{array}{cc} 
2 x_+ \cos 2t & \sqrt{2} \cos t  \\
\sqrt{2} \sin t & 0 \\
\end{array}\right)
+ a_1\!
\left(\begin{array}{cc} 
-\sqrt{2} x_+^2 \sin^3 t & 0  \\
x_+ & \frac{1}{\sqrt{2}}\cos t \\
\end{array}\right) \nonumber \\
&\,& +\, c_1\!
\left(\begin{array}{cc} 
\sqrt{2} x_+^2 \cos^3 t & x_+  \\
0 & -\frac{1}{\sqrt{2}}\sin t \\
\end{array}\right).
\label{examB2solution}
\end{eqnarray}
The susy invariant solution is obtained 
when $a_1=c_1=0$. Again the case $W_2$ gives other solutions to our problem than the susy invariant ones.
\item The second case corresponds to  $\cos 2t = 0$ or $t = \frac{\pi}{4}$ (the case $t= \frac{3\pi}{4}$ 
is gauge equivalent) so that 
\begin{eqnarray}
K_2 (x_+, \frac{\pi}{4}) = 
\left(\begin{array}{cc} 
0 & x_+  \\
x_+ & 0 \\
\end{array}\right).
\label{examK2spc1}
\end{eqnarray}
Since $K_2 (x_+, \frac{\pi}{4})$ is  
symmetric, we assume that the matrix $\beta(x_+)$ is also symmetric, {\it i.e.}
\begin{eqnarray}
\beta_{21}(x_+) = \beta_{12}(x_+)\,.
\label{examassumption}
\end{eqnarray}
These quantities will remain arbitrary since the condition (\ref{necsufconditions3}) depends 
only on $\beta_{11}$ and $\beta_{22}$ and the susy invariant solutions will be obtained when $\beta_{11}=\beta_{22}=0$.

The condition (\ref{necsufconditions3}) may be written 
as follows, taking in particular $x_+ = x_- = x$:
\begin{eqnarray}
&(1 + x^2)^2 \Big(4 (x^2-1) \big((\beta_{11}^{\prime})^2+ (\beta_{22}^{\prime})^2\big) 
+  (1 + x^2)^2 \big((\beta_{11}^{\prime\prime})^2+ (\beta_{22}^{\prime\prime})^2\big)\Big) 
\nonumber \\
& - 8 x (1 + x^2) (x^2 - 2)\big(\beta_{11} \beta_{11}^{\prime} + 
\beta_{22} \beta_{22}^{\prime}\big) 
+ 4 x^2    (1 + x^2)^2 \big(\beta_{11} \beta_{11}^{\prime\prime} + 
\beta_{22} \beta_{22}^{\prime\prime}\big) 
\nonumber \\
& + 4 (1- 4x^2 + x^4) (\beta_{11}^2+\beta_{22}^2)
- 4 x (1 + x^2)^3 \big(\beta_{11}^{\prime} \beta_{11}^{\prime\prime} + 
\beta_{22}^{\prime} \beta_{22}^{\prime\prime}\big) = 0\,.
\label{examwithxisequal}
\end{eqnarray}
Let us first mention the invariance of this equation 
with respect to the exchange 
$\beta_{11}\leftrightarrow\beta_{22}$. 

After some trials we first get a solution choosing 
\begin{eqnarray}
\beta_{22}(x) = x \beta_{11}(x)\,.
\label{examtrials1}
\end{eqnarray}
Condition (\ref{examwithxisequal}) thus becomes very 
simple 
\begin{eqnarray}
(1 + x^2)^5 (\beta_{11}^{\prime\prime}(x))^2 = 0\,,
\label{examtrialverysimp}
\end{eqnarray}
which implies that 
\begin{eqnarray}
\beta_{11}(x) = a_0 + d_2 x\,, \qquad  
\beta_{22}(x) = x (a_0 + d_2 x)\,.
\label{examassumimplies}
\end{eqnarray}

Using this observation, we assume that $\beta_{11}(x)$ 
and $\beta_{22}(x)$ are real polynomial 
in  $x$. We can 
easily show that they must be at most 
of degree $2$. If we take 
\begin{eqnarray}
\beta_{11}(x) = a_2 x^2 + a_1 x + a_0\,, \qquad  
\beta_{22}(x) = d_2 x^2 + d_1 x + d_0\,,
\label{examatrials2}
\end{eqnarray}
and identify the coefficients of different powers of $x$ 
in (\ref{examwithxisequal}), we get three independent 
equations for the parameters $a_i$ and $d_i$,
\begin{eqnarray}
&a_0^2-a_1^2+a_2^2+d_0^2-d_1^2+d_2^2 = 0\,, 
\qquad  
a_0 a_2+d_0 d_2 = 0\,,
\nonumber \\
& a_0 a_1 - a_1 a_2 + d_1 (d_0 - d_2) = 0\,.
\label{examcondonparameters}
\end{eqnarray}

Let us first assume that $a_0\neq 0$, we then get
\begin{eqnarray}
\beta_{11}(x) = a_0 + (d_2 - d_0) x - \frac{d_0 d_2}{a_0} x^2\,, \nonumber \\  
\beta_{22}(x) = d_0 + (a_0 + \frac{d_0 d_2}{a_0}) x + d_2 x^2\,,
\label{examatrials3}
\end{eqnarray}
where $a_0$, $d_0$ and $d_2$ remain arbitrary real parameters. Clearly the solution (\ref{examassumimplies}) is 
obtained when $d_0 =0$. 

Now, consider $a_0 = 0$. We then get different subcases ($\epsilon=\pm 1$)
\begin{itemize}
 \item $d_2\neq 0$, $d_0 = 0$ $\Longrightarrow 
\left \{\begin{array}{c} 
\beta_{11}(x) = a_2 x^2 + \epsilon d_2 x\,, \\
\beta_{22}(x) = d_2 x^2 - \epsilon a_2 x\,, \\
\end{array}
\right.$
 \item $d_0\neq 0$, $d_2 = 0$ $\Longrightarrow 
\left \{\begin{array}{c} 
\beta_{11}(x) = a_2 x^2 + \epsilon d_0 x\,, \\
\beta_{22}(x) = \epsilon a_2 x + d_0\,. \\
\end{array}
\right.$
\end{itemize}
\end{enumerate}
%%%%%%%%%%%%%%%%%%%%%%%%%%%%%%%%%%%%%%%%%%%%%%%%%%%%%%%%%%%%%%%%%%%%%%%%%%%%%%%%%%

%%%%%%%%%%%%%%%%%%%%%%%%%%%%%%
\subsection{The case of $Z_3$}
\label{caseZ3}
In this case we have 
$\det Z_3^{\dagger} Z_3 = \left(1 + |x|^2 \right)^3$, {\it i.e.;} $r=3$ and $\kappa=\frac23$. 
With the solution $W_3$ as in (\ref{extholZi}), the 
condition (\ref{necsufconditions12}) becomes a third degree polynomial 
in $x_-$. Equating the coefficients of different powers of $x_-$ to 
zero we obtain the following equations :
\begin{eqnarray}
&&2 x_+^3 \left(\sqrt{2} \beta_{12}'' + 5 \beta_{22}'' \right) 
- x_+^2 \left(3 \beta_{11}'' + 8 \sqrt{2} \beta_{12}' 
+ 6 \sqrt{2} \beta_{21}'' + 40 \beta_{22}'\right) \nonumber \\
&&+ 6 x_+ 
\left(3 \beta_{11}' + 2 \sqrt{2} \beta_{12} + 6 \sqrt{2} \beta_{21}' + 10 \beta_{22}\right) 
-36 \beta_{11} -72 \sqrt{2} \beta_{21} = 0\,,  
\label{Z34conditions1rev} \\
&&x_+^2 \left(-\beta_{11}'' + 8 \sqrt{2} \beta_{12}' - 4 \sqrt{2} 
\beta_{21}'' + 6 x_+ \beta_{22}'' + 4 \beta_{22}'\right) \nonumber \\
&&- x_+ \left(4 \beta_{11}' 
+ 24 \sqrt{2} \beta_{12} - 8 \sqrt{2} \beta_{21}' 
+ 48 \beta_{22}\right) 
+ 28 \beta_{11} + 16 \sqrt{2} \beta_{21} = 0\,,
\label{Z34conditions2rev} \\
&&x_+ \left(\beta_{11}'' - 2 \sqrt{2} x_+ \beta_{12}'' + 8 \sqrt{2} \beta_{12}' 
- 2 \sqrt{2} \beta_{21}'' + 2 x_+ \beta_{22}'' + 16 \beta_{22}'\right) \nonumber \\
&&- 10 \beta_{11}' + 12 \sqrt{2} \beta_{12} - 4 \sqrt{2} \beta_{21}' 
+ 12 \beta_{22} = 0\,,
\label{Z34conditions3rev} \\
\nonumber \\
&&-3 \beta_{11}'' + 4 \sqrt{2} x_+ \beta_{12}'' + 8 \sqrt{2} \beta_{12}' 
+ 2 x_+ \beta_{22}'' + 4 \beta_{22}' = 0\,.
\label{Z34conditions4rev}
\end{eqnarray}

We first solve (\ref{Z34conditions4rev}) for $\beta_{11}$ and get 
\begin{eqnarray}
\beta_{11}(x_+)=\frac{4 \sqrt{2} x_+}{3} \beta_{12}(x_+) 
+ \frac{2 x_+}{3} \beta_{22}(x_+) + c_1 x_+ + c_2\,,
\label{Z34integratedrev}
\end{eqnarray}
where $c_1$ and $c_2$ are integration constants. Upon 
introducing (\ref{Z34integratedrev}) into some linear 
combinations of (\ref{Z34conditions1rev}), (\ref{Z34conditions2rev}) 
and (\ref{Z34conditions3rev}) we obtain 
\begin{eqnarray}
\beta_{22}(x_+) = \frac{\sqrt{2}}{4} \beta_{12}(x_+) 
+ \frac{3\sqrt{2}}{4 x_+} \beta_{21}(x_+) + \frac{3}{4} c_1 + \frac{3}{4 x_+} c_2\,.
\label{Z34integratedrev2}
\end{eqnarray} 
In order to satisfy all the 
equations (\ref{Z34conditions1rev})-(\ref{Z34conditions3rev}), the integration constants $c_1$ 
and $c_2$ must vanish. Hence, 
we can give the final form of $\beta_{11}$ and $\beta_{22}$ as 
\begin{eqnarray}
\beta_{11}(x_+)=\frac{3 x_+}{\sqrt{2}} \beta_{12}(x_+) 
+ \frac{1}{\sqrt{2}} \beta_{21}(x_+),
\label{exam3beta11Z3}
\end{eqnarray}
\begin{eqnarray}
\beta_{22}(x_+) = \frac{\sqrt{2}}{4} \beta_{12}(x_+) 
+ \frac{3\sqrt{2}}{4 x_+} \beta_{21}(x_+).
\label{exam3beta22Z3}
\end{eqnarray} 
Introducing (\ref{exam3beta11Z3}) and (\ref{exam3beta22Z3}) into the condition (\ref{necsufconditions3}) we obtain 
\begin{eqnarray}
\partial_+ \partial_- 
\left(
\frac{\big|\big(1 + 3 |x|^2\big) \beta_{21} 
- x_+ \big(1 + |x|^2\big) \partial_+ \beta_{21}\big|^2}{|x|^4 \big(1 + |x|^2\big)^2}
\right) = 0\,,
\label{exam3towardslastcon}
\end{eqnarray}
or equivalently
\begin{eqnarray}
\left(
\frac{\big|\big(1 + 3 |x|^2\big) \beta_{21} 
- x_+ \big(1 + |x|^2\big) \partial_+ \beta_{21}\big|^2}{|x|^4 \big(1 + |x|^2\big)^2}
\right) = f(x_+) + g(x_-)\,,
\label{exam3getteingtheresult}
\end{eqnarray}
for arbitrary functions $f$ and $g$ of given variables.
Requiring it to be satisfied when $x_+ = 0$ and 
$x_- = 0$ separately we obtain 
\begin{eqnarray}
\beta_{21} (x_+) = \gamma_1 x_+\,,
\label{exam3solforr}
\end{eqnarray}
where $\gamma_1$ is an arbitrary constant. Upon 
introducing it into (\ref{exam3getteingtheresult}) 
we get 
\begin{eqnarray}
f(x_+) + g(x_-) = \frac{4 |x|^2 \gamma_1^2}{(1 + |x|^2)^2}\,,
\label{exam3resultfinal}
\end{eqnarray}
which immediately implies that $\gamma_1 = 0$ and hence $\beta_{21} = 0$. 

The necessary and sufficient conditions 
(\ref{necsufconditions12}) and (\ref{necsufconditions3}) 
are thus satisfied and finally the
constant curvature holormorphic solution $W_3$ is 
given by the form
\begin{equation}
W_3= Z_3+ i\theta_+ \sqrt{3} \eta(x_+)  \beta_{22}(x_+) \partial_+ Z_3.
\label{exam3final}
\end{equation}
Hence in this case we have obtained the susy invariant solution as the unique constant curvature holomorphic solution.

%%%%%%%%%%%%%%%%%%%%%%%%%%%%%%%%%%%%%%%%%%%%%%%%%%%%%%%%%%%%%%%%%%%%%%%%%%%%%%%%%%%%%%%%%%%%%%%%%%%%%%%%%%%%%

%%%%%%%%%%%%%%%%%%%%%%%%%%%%%%%%%%%%%%%%%%%%%%%%%%%%%%%%%%%%
\subsection{The case of $Z_4$}
\label{caseZ4}
In this case we have 
$\det Z_4^{\dagger} Z_4 = \left(1 + |x|^2 \right)^4$, {\it i.e.;} $r=4$ and $\kappa=\frac12$.

Again the condition (\ref{necsufconditions12}) becomes a third degree 
polynomial in $x_-$ after introducing the solution $W_4$ given in 
(\ref{extholZi}). Similarly as what we did with $W_3$, we equate the coefficients of different powers 
of $x_-$ to zero and now get  
\begin{eqnarray}
\beta_{11}(x_+) = 3 x_+^2 \beta_{22}(x_+),
\label{exam4beta11Z4}
\end{eqnarray} 
\begin{eqnarray}
\beta_{21}(x_+)= - \beta_{12}(x_+) 
+ 2 \sqrt{3} x_+ \beta_{22}(x_+).
\label{exam4beta21Z4}
\end{eqnarray}

We are left with the last condition (\ref{necsufconditions3}). 
Introducing (\ref{exam4beta11Z4}) and (\ref{exam4beta21Z4}) into this 
last condition we find that $\beta_{21}(x_+) = \beta_{12}(x_+)$.
Finally,  the constant curvature holomorphic solution $W_4$ is 
given as
\begin{equation}
W_4= Z_4+ i\theta_+\frac12 \eta(x_+)  \beta_{22}(x_+) \partial_+ Z_4.
\label{exam4final}
\end{equation}
Again, we have obtained in this case the susy invariant solution as the unique constant curvature holomorphic solution.

%%%%%%%%%%%%%%%%%%%%%%%%%%%%%%%%%%%%%%%%%%%%%%%%%%%%%%%%%%%%%%%%%%%%%%%%%%%%%%%%%%%%%%%%%%%%%%%%%%%%%%%%%%%%%
%%%%%%%%%%%%%%%%%%%%%%%%%%%%%%%%%%%%%%%%%%%%%%%%%%%%%%%%%%%%%%%%%%%%%%%%%%%%%%%%%%%%%%%%%%%%%%%%%%%%%%%%%%%%%%
\section{About the Pl{\"u}cker embedding of $G(2,4)$ into $\field{C}P^{5}$
\label{casecp5}}
It is well-known \cite{Calabi, Griffiths, Yang} for the bosonic model that Pl{\"u}cker embedding of $G(2,N)$ into $\field{C}P^{\frac{N(N-1)}{2}-1}$ is obtained by introducing the map $\Phi_N : G(2,N) \to \field{C}P^{\frac{N(N-1)}{2}-1}$.

In our case, we get explicitly the map  $\Phi_4 : G(2,4) \to \field{C}P^{5}$, on the form
\begin{equation}
\Phi_4 (Z)= (1 ,\ -z_{31}, \ z_{32}, \ -z_{41}, \ z_{42} ,\ z_{31} z_{42}- z_{32} z_{41})^T,
\label{plucker}
\end{equation}
when
\begin{equation}
Z= 
\left(\begin{array}{cc}
1 & 0\\
0 & 1\\
z_{31}& z_{32}\\
z_{41} & z_{42}\\
\end{array}\right).
\label{Zgeneral}
\end{equation} 
It is thus easy to show, up to gauge equivalence $\hat{\Phi}_4 (Z_r)=V_r \Phi_4 (Z_r)$ ($V_r \in U(6)$ is a constant matrix, explicitly given in Appendix D for $r = 1, 2, 3, 4$) that we get
\begin{equation}
\hat{\Phi}_4 (Z_1)= (1, \ x_+ ,\ 0, \ 0, \ 0, \ 0)^T,
\label{pluckerz1}
\end{equation}
\begin{equation}
\hat{\Phi}_4 (Z_2)= (1, \ \sqrt{2} x_+ ,\ x_+^2, \ 0, \ 0, \ 0)^T,
\label{pluckerz2}
\end{equation}
\begin{equation}
\hat{\Phi}_4 (Z_3)= (1, \ \sqrt{3} x_+, \ \sqrt{3} x_+ ^2,\ x_+^3, \ 0, \ 0)^T,
\label{pluckerz3}
\end{equation}
\begin{equation}
\hat{\Phi}_4 (Z_4)= (1, \ 2 x_+ ,\ \sqrt{6} x_+^2, \ 2 x_+^3, \ x_+ ^4,\ 0)^T.
\label{pluckerz4}
\end{equation}

For the bosonic case, such a correspondence has helped \cite{DHZ2013I} constructing the holomorphic solutions with constant curvature of $G(2,N)$ from the Veronese curves embedded in
 $\field{C}P^{\frac{N(N-1)}{2}-1}$.
 
The Veronese curve in $\field{C}P^{5}$ 
 \begin{equation}
(1, \ \sqrt{5}  x_+, \ \sqrt{10} x_+^2, \ \sqrt{10}  x_+^3, \ \sqrt{5} x_+ ^4,\ x_+^5)^T
\label{plucker5}
\end{equation} 
does not give rise to a solution of  $G(2,4)$. Indeed, the constraint on the Pl{\"u}cker coordinates is not satisfied \cite{DHZ2013I}.

For the susy case, we see that the arbitrariness in the possible choices of the fermionic contribution $A(x_+)$ may thus come from some arbitrariness in the corresponding solutions of $\field{C}P^{5}$. 
A detailed discussion of such a correspondence is out of the scope of this paper.

%Indeed, now we consider the generalized map
%\begin{eqnarray}
%\tilde{\Phi}_4 (W) &=& (1, \ -z_{31}, \ z_{32}, \ -z_{41}, \ z_{42} \ z_{31} z_{42}- z_{32} z_{41})^T \nonumber \\
%&+& i T_1(0, \ - \beta_{31}, \  \beta_{32},\ - \beta_{41}, \  \beta_{42},\  \beta_{31}z_{42}+\beta_{42}z_{31}-\beta_{32}z_{41} -\beta_{41}z_{32})^T\,.
%\label{pluckersusy}
%\end{eqnarray}
%First if we assume a susy invariant solution, thus $\beta_{ij}=\partial_+ z_{ij}$ and $\tilde{\Phi}_4 (W)=(1+i T_1 \partial_+) \Phi_4 (Z)$.

%{\color{red} Say more or not? This could be good to relate the calculations for the different solutions of susy $\field{C}P^{5}$ to ours...not easy to do. It would give another to approach the case of $G(2,5)$ via $\field{C}P^{9}$. }

%%%%%%%%%%%%%%%%%%%%%%%%%%%%%%%%%%%%%%%%%%%%%%%%%%%%%%%%%%%%%%%%%%%%%%%%%%%%%%%%%%%%%%%%%%%%%%%%%%%%%%%%%%%%%

%%%%%%%%%%%%%%%%%%%%%%%%%%%%%%%%%%%%%%%%%%%%%%%%%%%%%%%%%%%%%%%%%%%%%%%%%%%%%%%%%%%%%%%%%%%%%%%%%%%%%%%%%%%%
\section{Conclusions and final comments \label{conc}}
In this article we give some criteria for having constant curvature 
holomorphic solutions of the susy grassmannian $G(M,N)$ 
$\sigma$-model. With the help of the susy gauge invariance of the 
model we first show that the susy holomorphic solution 
given in (\ref{susyholsol}) ({\it i.e.}; generalisation of 
bosonic holomorphic solution) leads to a constant curvature 
surface. This kind of a solution is called a susy invariant one, 
in analogy with the discussion given in \cite{DHYZ}. 
%This result is true for arbitrary $M$ and $N$ values. 

Then we restrict ourselves to the susy $G(2,N)$ $\sigma$-model 
and give the necessary and sufficient conditions to get such solutions.
The case of  $G(2,4)$ is studied in detail taking into account the classification of bosonic solutions \cite{4solutionsG(24)}. 

The existence of an embedding of $G(2,4)$ into $\field{C}P^{5}$ shows a connection between the corresponding solutions. It will be relevant when we consider higher dimensional models. Indeed the case of the susy $G(2,5)$ 
$\sigma$-model could be treated taking into account the results for the bosonic case \cite{solutionsG(25)} and its relation to $\field{C}P^{9}$. Some results for the bosonic $G(2,N)$ \cite{solutionsG(2n)} will thus be used to study similar solutions of the susy $G(2,N)$ $\sigma$-model.

%Again it is worth to mention 
%that although the expression (\ref{detWdagW}) is valid for any 
%values of $M$ and $N$, the definition of the functions $X_i$, 
%$(i = 1,2,3)$ given in (\ref{RX1X2X3def1})--(\ref{RX1X2X3def}), 
%is valid only for $M = 2$.

%In particular, taking into account the susy gauge invariance 
%and fundamental constraints to have a constant curvature surface 
%we present all the holomorphic solutions of the susy  $G(2,4)$ 
%$\sigma$-model with constant curvature in a canonical form. In 
%\cite{4solutionsG(24)}, all the constant curvature holomorphic
%solutions of the purely bosonic case was given by (\ref{Z1234}) 
%up to a $U(4)$ gauge transformation. We generalize these bosonic 
%solutions as in (\ref{extholZi}) and investigate each of the cases 
%separately. Interestingly for $W_3$ and $W_4$, the only solutions 
%are the susy invariant ones. However, it is not true for  $W_1$ 
%and $W_2$.

%In the future we plan to consider the susy $G(2,5)$ 
%$\sigma$-model and if possible to generalize our results to 
%susy $G(2,N)$ $\sigma$-model where the purely bosonic solutions 
%are given in \cite{solutionsG(25)} and \cite{solutionsG(2n)}, 
%respectively. Also it would be challenging and interesting to 
%consider constant curvature non-holomorphic solutions of these models.  In the bosonic case, %they are given\cite{4solutionsG(24)} for $G(2,4)$.

%%%%%%%%%%%%%%%%%%%%%%%%%%%%%%%%%%%%%%%%%%%%%%%%%%%%%%%%%%

%%%%%%%%%%%%%%%%%%%%%%%%%%%%%%%%%%%%%%%%%%%%%%%%%%%%%%%%%
\section*{Acknowledgments}
\.{I}Y thanks the Centre de Recherches Math\'{e}matiques, Universit\'{e} de Montr\'{e}al, 
where part of this work was done, for the kind hospitality. VH acknowledges the support 
of research grants from NSERC of Canada. Several discussions with W J Zakrzewski have been very fruiful.

%%%%%%%%%%%%%%%%%%%%%%%%%%%%%%%%%%%%%%%%%%%%%%%%%%%%%%%%%%%%%%%%%%
\appendix
\section{Proof of Lemma \ref{lemma1} \label{seclemma1}}
Here we will prove Lemma \ref{lemma1}. 
\begin{proof}
Let us first show that the Lemma \ref{lemma1} holds for $M = 2$ and then generalize it to general $M$. 
For $M = 2$, we have 
\begin{eqnarray}
(1+\mathcal{D})B=\left(\begin{array}{cc}
\left(1+\mathcal{D}\right)b_{11} & \left(1+\mathcal{D}\right)b_{12}\\
\left(1+\mathcal{D}\right)b_{21} & \left(1+\mathcal{D}\right)b_{22}\\
\end{array}\right),
\label{WWN=2provelemma}
\end{eqnarray}	
whose determinant is
\begin{eqnarray}
\det\left[\left(1+\mathcal{D}\right)B\right] 
= (1+\mathcal{D})b_{11}(1+D)b_{22}-(1+\mathcal{D})b_{12}(1+D)b_{21}.
\label{detWWN=2provelemma2}
\end{eqnarray}
It is enough to show that (\ref{detWWN=2provelemma2}) can be expressed 
as $\left(1+\mathcal{D}\right)\left(b_{11}b_{22}-b_{12}b_{21}\right)$. Let us 
consider the following two equations which has to be verified: 
\begin{eqnarray}
\left(1+\mathcal{D}\right)b_{11}\left(1+\mathcal{D}\right)b_{22} &=& \left(1+\mathcal{D}\right)
\left(b_{11}b_{22}\right),
\label{twoterms1} \\
(1+\mathcal{D})b_{12}(1+\mathcal{D})b_{21} &=& \left(1+\mathcal{D}\right)(b_{12}b_{21}).
\label{twoterms2}
\end{eqnarray}
We concentrate our calculations on (\ref{twoterms1}). The result for (\ref{twoterms2}) will then follow 
immediately. In order to simplify the calculations we separate the differential 
operator $\mathcal{D}$ into first- and second-order parts as 
\begin{eqnarray}
\mathcal{D}_1 = i\theta_+ \eta \partial_+ + i \theta_- \eta^{\dagger} \partial_-, 
\qquad 
\mathcal{D}_2 = - \theta_+ \theta_- |\eta|^2 \partial_+\partial_-.
\label{separate}
\end{eqnarray}
Since $\mathcal{D}_1$ is a first-order operator 
and $\mathcal{D}_2$ is a second-order operator, it is not 
difficult to verify (\ref{twoterms1}) by considering the 
properties of grassmann variables and the following identity 
\begin{eqnarray}
\mathcal{D}_2 (b_{11}b_{22}) = 
b_{11}(\mathcal{D}_2 b_{22}) 
+ (\mathcal{D}_2 b_{11})b_{22} 
+ (\mathcal{D}_1 b_{11})(\mathcal{D}_1 b_{22}).
\label{identityprovinglemma}
\end{eqnarray}
Hence the Lemma \ref{lemma1} is shown to be true for $M = 2$. 
For general $M > 2$, $(1+\mathcal{D})B$ will be an $M\times M$ 
matrix whose determinant can be expressed as  
\begin{eqnarray}
\det \left[(1+\mathcal{D})B\right]
= 
\sum(-1)^{sg(\nu_i)} (1+\mathcal{D})b_{1\nu_1}(1+\mathcal{D})b_{2\nu_2}...(1+\mathcal{D})b_{N-1\nu_{N-1}},
\label{detgeneral}
\end{eqnarray}
where the sum is over the permutations $\nu_i$. 
By using (\ref{twoterms1}) it is clear that the terms of this sum can be rewritten as
\begin{eqnarray}
&(1+\mathcal{D})b_{1\nu_1} (1+\mathcal{D})b_{2\nu_2}(1+\mathcal{D})b_{3\nu_3}...
(1+\mathcal{D})b_{{N-1}\nu_{N-1}} \nonumber \\ 
& =(1+\mathcal{D})(b_{1\nu_{1}}b_{2\nu_{2}})(1+\mathcal{D})b_{3\nu_3}...
(1+\mathcal{D})b_{{N-1}\nu_{N-1}},
\label{detgeneralreorganise}
\end{eqnarray}
and following the same strategy, they are equal to 
\begin{eqnarray}
&(1+\mathcal{D})(b_{1\nu_{1}}b_{2\nu_{2}}) (1+\mathcal{D})b_{3\nu_3}(1+\mathcal{D})b_{4\nu_4}...
(1+\mathcal{D})b_{{N-1}\nu_{N-1}} \nonumber \\
& =(1+\mathcal{D})(b_{1\nu_{1}}b_{2\nu_{2}}b_{3\nu_3})(1+\mathcal{D})b_{4\nu_4}...
(1+\mathcal{D})b_{N-1\nu_{N-1}}.
\label{detgeneralreorganiseinter1}
\end{eqnarray}		
By iteration, one obtains 
\begin{eqnarray}
(1+\mathcal{D})b_{1\nu_1}(1+\mathcal{D})b_{2\nu_2}...
(1+\mathcal{D})b_{N-1\nu_{N-1}}=(1+\mathcal{D})(b_{1\nu_1}b_{2\nu_2}b_{3\nu_3}...
b_{{N-1}\nu_{N-1}}).
\label{detgeneralreorganiseinter2}
\end{eqnarray}
By applying this argument to all of the terms in the sum (\ref{detgeneral}), 
it is clear that for any value of $M$, we have 
\begin{eqnarray}
\det\left[(1+\mathcal{D})B(x_+,x_-)\right]=(1+\mathcal{D})\det\left[B(x_+,x_-)\right].
\label{lemmaresult}
\end{eqnarray}
\end{proof}

%%%%%%%%%%%%%%%%%%%%%%%%%%%%%%%%%%%%%%%%%%%%%%%%%%%%%%%%%%%%%%%%%%%%%%%%%%%%%%%%%%%%%%%%%%%%%%%%%%%%%%%%%%%%%

%%%%%%%%%%%%%%%%%%%%%%%%%%%%%%%%%%%%%%%%%%%%%%%%%%%%%%%%%%%%%%%%%%
\section{Equations (\ref{metricthm1proof2}) and (\ref{proof324})\label{appthm}}
In this appendix we give some explicit calculations, which are needed in the 
proof of the theorem \ref{theorem1}. In particular, we derive 
(\ref{metricthm1proof2}) and (\ref{proof324}). In (\ref{metricthm1proof2}) we have 
\begin{eqnarray}
\tilde{g} = \partial_+ \partial_- \big(\left(1+\mathcal{D}\right)\ln R \big) 
= \partial_+ \partial_-\ln R + \partial_+ \partial_- \big(\mathcal{D} \ln R \big) \,.
\label{rappeq1}
\end{eqnarray}
Since $\eta$ and $\eta^{\dagger}$ in the definition of operator $\mathcal{D}$ (\ref{definitionoperatorD}) 
are functions of $x_+$ and $x_-$, respectively, we can develop the last term in (\ref{rappeq1}) as 
\begin{eqnarray}
\partial_+ \partial_- \big(\mathcal{D} \ln R \big) &=& 
\partial_+ \partial_- \big(i \theta_+ \eta \partial_+ \ln R 
+ i \theta_- \eta^{\dagger} \partial_- \ln R 
- \theta_+ \theta_-  \eta^{\dagger}  \eta  \partial_+ \partial_- \ln R 
\big)\,, \nonumber \\
&=& \partial_+ 
\Big(i \theta_+ \eta \partial_+ \partial_- \ln R 
+ i \theta_- \big(\partial_- \eta^{\dagger} \big) \partial_- \ln R 
+ i \theta_- \eta^{\dagger} \partial_- \partial_- \ln R \nonumber \\
&\,& +  \theta_+ \theta_-  \big(\partial_- \eta^{\dagger}\big) \eta  \partial_+ \partial_- \ln R 
- \theta_+ \theta_-  \eta^{\dagger}  \eta \partial_- \partial_+ \partial_- \ln R 
\Big)\,. \nonumber \\ 
\end{eqnarray}
Deriving again with respect to $\partial_+$, we finally get 
\begin{eqnarray}
\partial_+ \partial_- \big(\mathcal{D} \ln R \big) &=& \Bigg(\mathcal{D} 
+ i \theta_+ \big(\partial_+ \eta\big)  
+ i \theta_- \big(\partial_- \eta^{\dagger} \big) \nonumber \\ 
&\,& 
- \theta_+ \theta_-  \Big(\big(\partial_- \eta^{\dagger}\big) \big(\partial_+ \eta\big) 
+  \big(\partial_- \eta^{\dagger}\big)  \eta \partial_+ 
+ \eta^{\dagger} \big(\partial_+ \eta\big) \partial_- \Big)
\Bigg) \partial_+ \partial_- \ln R \,, \nonumber \\ 
&=& \Big(\mathcal{D}+\mathcal{D_\eta}
\Big) \partial_+ \partial_- \ln R \,,
\label{rappeq2}
\end{eqnarray}
where we define $\mathcal{D_\eta}$ as in (\ref{rdefmatcaleta}) 
and hence (\ref{metricthm1proof2}) holds. 

Let us now show that (\ref{proof324}) holds. Using the Taylor expansion of the 
logarithmic function (\ref{Taylorlog}), we have 
\begin{eqnarray}
\ln \tilde{g} &=& \ln \left[
\left(1 + \mathcal{D} + \mathcal{D_\eta}\right) g \right]\,, \nonumber \\
&=& \ln g + \frac{1}{g} \big(\mathcal{D} + \mathcal{D_\eta}\big) g 
- \frac{1}{2} \Big(\frac{1}{g} \big(\mathcal{D} + \mathcal{D_\eta}\big) g\Big)^2\,.
\label{rappendixproof324}
\end{eqnarray} 
Expressions $\left(\mathcal{D} + \mathcal{D_\eta}\right) g$ 
and $\big(\left(\mathcal{D} + \mathcal{D_\eta}\right) g\big)^2$ 
in (\ref{rappendixproof324}) can be expressed as 
\begin{eqnarray}
\left(\mathcal{D} + \mathcal{D_\eta}\right) g &=& 
\Bigg( i \theta_+ \big((\partial_+ \eta) + \eta \partial_+\big) 
+ i \theta_- \big((\partial_- \eta^{\dagger}) + \eta^{\dagger} \partial_- \big)  \nonumber \\
&-& \theta_+ \theta_- 
\Big((\partial_- \eta^{\dagger})(\partial_+ \eta) 
+ (\partial_- \eta^{\dagger})\eta\partial_+ 
+ \eta^{\dagger} (\partial_+ \eta) \partial_- 
+ \eta^{\dagger} \eta \partial_+ \partial_- \Big)\Bigg) g,
\label{rappendiixeqinlog1}
\end{eqnarray}
and
\begin{eqnarray}
\big(\left(\mathcal{D} + \mathcal{D_\eta}\right) g\big)^2 &=& 
- 2\, \theta_+ \theta_- \Big( 
(\partial_- \eta^{\dagger})(\partial_+ \eta) g^2
+ \eta^{\dagger} (\partial_+ \eta) (\partial_- g) g \nonumber \\
&+& (\partial_- \eta^{\dagger})\eta g (\partial_+ g) 
+ \eta^{\dagger} \eta (\partial_+ g) (\partial_- g) \Big)\,.
\label{rappendiixeqinlog2}
\end{eqnarray}

Upon introducing (\ref{rappendiixeqinlog1}) and (\ref{rappendiixeqinlog2}) 
into (\ref{rappendixproof324}) and making necessary cancellations we arrive 
at (\ref{proof324}).

%%%%%%%%%%%%%%%%%%%%%%%%%%%%%%%%%%%%%%%%%%%%%%%%%%%%%%%%%%%%%%%%%%%%%%%%%%%%%%%%%%%%%%%%%%%%%%%%%%%%%%%%%%%%%

\section{Residual gauge invariance for $Z_2$}
Here we want to prove that the bosonic holomorphic solutions $Z_2(x_+,t)$ may be considered in the interval 
$t \in [0, \pi[$. For obtaining the admissible gauge transformations $V$, we refer \cite{HLYZ}. In the case of $Z_2$ 
we have 
\begin{eqnarray}
K = K_2 =  \left(\begin{array}{cc} 
x_+^2 \cos 2t & \sqrt{2} x_+ \cos t  \\
\sqrt{2} x_+ \sin t & 0 \\
\end{array}\right)\,.
\label{examZ2matK}
\end{eqnarray}

First notice that since $V$ is a constant matrix and 
$\lim \limits_{x_+ \to 0} K_2 = 0$, we have 
\begin{eqnarray}
\lim_{x_+ \to 0} K_{2G} = 0\,.
\label{examlimit}
\end{eqnarray}
Thus $V_{21} = 0$ and by unitarity $V_{12} = 0$. 
Finally, $K_{2G}$ and $K_{2}$ are related as 
\begin{eqnarray}
K_{2G} = V_{22} K_2 V_{11}^{\dagger}\,.
\label{examrelationgauge}
\end{eqnarray}
Moreover, since the entries of $K_2$ 
are real functions of $x_+$, the matrices $V_{11}$ and 
$V_{22}$ are real and orthogonal. In particular, this means 
that 
\begin{eqnarray}
\det K_{2G} = \pm \det K_2 = \pm x_+^2 \sin 2t \,.
\label{examrealorthmean}
\end{eqnarray}
Since $K_{2G}$ has the same form as $K_2$, it must have 
the same determinant as $K_2$ up to a sign. 

Assuming that $t_0$ is fixed in $K_2$, the only admissible 
values of $t$ in $K_{2G}$ are 
\begin{eqnarray}
t = \pm t_0 + k \frac{\pi}{2}\,.
\label{examadmissiblevalK}
\end{eqnarray}
It can easily be shown that $V_{11}$ and 
$V_{22}$ can be fixed, such that 
\begin{eqnarray}
K_2 (\pm t_0 + \pi) \simeq K_2 (t_0)\,,
\label{examfixingV11V221}
\end{eqnarray}
and 
\begin{eqnarray}
K_2 (\pm t_0 + \frac{3\pi}{2}) \simeq K_2 (t_0 + \frac{\pi}{2})\,.
\label{examfixingV11V222}
\end{eqnarray}
$K_2 (t_0)$ and $K_2 (t_0 + \frac{\pi}{2})$ are not gauge equivalent and we have reduced the interval of 
values of the parameter $t$ between $[0, \pi[$.

%%%%%%%%%%%%%%%%%%%%%%%%%%%%%%%%%%%%%%%%%%%%%%%%%%%%%

\section{The transformation matrices $V_r$ of Section \ref{casecp5} \label{transmatapp}}
Here, we give the explicit form of the transformation matrices $V_r$ that we used 
in Section \ref{casecp5}. 
\begin{eqnarray}
&& V_1 = \left(
\begin{array}{cccccc}
 1 & 0 & 0 & 0 & 0 & 0 \\
 0 & -1 & 0 & 0 & 0 & 0 \\
 0 & 0 & 1 & 0 & 0 & 0 \\
 0 & 0 & 0 & 1 & 0 & 0 \\
 0 & 0 & 0 & 0 & 1 & 0 \\
 0 & 0 & 0 & 0 & 0 & 1 \\
\end{array}
\right)\!, 
V_2 = \left(
\begin{array}{cccccc}
 1 & 0 & 0 & 0 & 0 & 0 \\
 0 & 0 & \cos t & -\sin t & 0 & 0 \\
 0 & -\cos 2 t & 0 & 0 & 0 & -\sin 2 t \\
 0 & 0 & \sin t & \cos t & 0 & 0 \\
 0 & 0 & 0 & 0 & 1 & 0 \\
 0 & -\sin 2 t & 0 & 0 & 0 & \cos 2 t \\
\end{array}
\right)\!, 
\nonumber \\
\\
&& V_3 = \left(
\begin{array}{cccccc}
 1 & 0 & 0 & 0 & 0 & 0 \\
 0 & 0 & \frac{2 \sqrt{2}}{3} & 0 & \frac{1}{3} & 0 \\
 0 & -1 & 0 & 0 & 0 & 0 \\
 0 & 0 & 0 & 0 & 0 & 1 \\
 0 & 0 & 0 & 1 & 0 & 0 \\
 0 & 0 & \frac{1}{3} & 0 & -\frac{2 \sqrt{2}}{3} & 0 \\
\end{array}
\right)\!, 
V_4 = \left(
\begin{array}{cccccc}
 1 & 0 & 0 & 0 & 0 & 0 \\
 0 & 0 & 0 & 0 & 1 & 0 \\
 0 & 0 & \frac{1}{\sqrt{2}} & -\frac{1}{\sqrt{2}} & 0 & 0 \\
 0 & -1 & 0 & 0 & 0 & 0 \\
 0 & 0 & 0 & 0 & 0 & 1 \\
 0 & 0 & \frac{1}{\sqrt{2}} & \frac{1}{\sqrt{2}} & 0 & 0 \\
\end{array}
\right)\,.
\nonumber
\label{tranmatappendix}
\end{eqnarray}

%%%%%%%%%%%%%%%%%%%%%%%%%%%%%%%%%%%%%%%%%%%%%%%%%%%%%%%%%%%%%%%%%%%%%%%%%%%%%%%%%%%%%%%%%%%%%%%%%%%%%%%%%%%%%%%%%%%%%%%

\end{document}